\newcommand{\Hdiv}{$ H( \mathrm{div} ) $}
\newcommand{\Hcurl}{$ H( \mathrm{curl} ) $}
\title{A PARALLEL EDGE ORIENTATION ALGORITHM FOR QUADRILATERAL MESHES
\thanks{This work was supported by The Grantham Institute and the National Environmental Research Council [grant
numbers NE/K006789/1 and NE/K008951/1].
This work used the ARCHER UK National Supercomputing Service (http://www.archer.ac.uk).}}
\author{M.~Homolya\footnotemark[2]\ \footnotemark[3]
\and D.~A.~Ham\footnotemark[3]\ \footnotemark[4]}
\begin{document}

\maketitle

\renewcommand{\thefootnote}{\fnsymbol{footnote}}

\footnotetext[2]{The Grantham Institute, Imperial College London,
London, SW7 2AZ, UK}
\footnotetext[3]{Department of Computing, Imperial College London,
London, SW7 2AZ, UK}
\footnotetext[4]{Department of Mathematics, Imperial College London,
London, SW7 2AZ, UK}

\renewcommand{\thefootnote}{\arabic{footnote}}

\begin{abstract}
One approach to achieving correct finite element assembly is to ensure
that the local orientation of facets relative to each cell in the mesh
is consistent with the global orientation of that facet.
Rognes \textsl{et al.}\@ have shown how to achieve this for any mesh
composed of simplex elements, and deal.II contains
a serial algorithm to construct a consistent orientation of any
quadrilateral mesh of an orientable manifold.

The core contribution of this paper is the extension of this algorithm
for distributed memory parallel computers, which facilitates its seamless
application as part of a parallel simulation system.

Furthermore, our analysis establishes a link between the
well-known \emph{Union-Find} algorithm and the construction of a
consistent orientation of a quadrilateral mesh.
As a result, existing work on the parallelisation of the Union-Find
algorithm can be easily adapted to construct further parallel algorithms
for mesh orientations.
\end{abstract}

\begin{keywords}
finite element assembly;
parallel algorithm;
quadrilateral mesh;
Firedrake;
ARCHER
\end{keywords}

\begin{AMS}
68W10; 68W15
\end{AMS}

\pagestyle{myheadings}
\thispagestyle{plain}
\markboth{M.~HOMOLYA AND D.~A.~HAM}{PARALLEL QUADRILATERAL EDGE ORIENTATIONS}

\section{Introduction}

A characteristic of the finite element method for the solution of partial
differential equations is that the representation of functions over the
domain can be chosen from a wide range of function spaces.  This choice is
achieved by selecting a particular local function space to be used for each
cell in the meshed domain. Information is communicated between adjacent
cells either by shared degrees of freedom on the cell boundaries, or by flux
integrals over the facets\footnote{By \emph{facet} we mean a mesh entity of
  codimension 1. In two dimensions these are the cell edges, in three
  dimensions, the cell faces.} between cells. For every case except the
lowest degree continuous finite elements, these coupling terms require the
orientations of the two cells adjacent to each facet to be reconciled with
the orientation of that facet. Failure to do so will result in false
identification of degrees of freedom falling on the facets, or incorrect
accounting for the flux direction through facets.

Consequently, every finite element implementation which supports facet integrals, elements
of polynomial degree greater than one, or \Hdiv{} or \Hcurl{} conforming
elements must somehow ensure that adjacent cells agree on the orientation
of the intervening facet.  This can either be achieved by explicitly
recording orientation information and exploiting this in the local and/or
global assembly operations, or it can be achieved by ensuring, in a sense
which we will later make formal, that the local orientation of facets
relative to each cell in the mesh is consistent with the global orientation
of that facet.

The consistent numbering approach has the advantage that the integral
assembly code is simpler than that required by the explicit orientation
approach, at the cost of somehow establishing the consistent numbering and
with the limitation that a consistent numbering may not exist for all
meshes. Consistent numberings are simple to construct for any mesh composed
of simplex elements \cite{Rognes2012} but the same algorithm does not extend
to quadrilaterals.  Instead, the deal.II finite element library
\cite{Bangerth2007} contains an algorithm to construct a consistent
orientation of any quadrilateral mesh of an orientable
manifold\footnote{A preprint documenting the deal.II algorithm has
recently appeared \cite{agelek2015}.}.  The
algorithm employed in deal.II is serial. This is less than ideal on modern
supercomputers, for which it would be both more convenient and more
efficient to construct the orientation using a distributed parallel
algorithm. The core contribution of this paper is therefore to present a
distributed parallel extension of the algorithm in deal.II which creates a
global consistent numbering.  This algorithm is presented in
\S\ref{sec:parallel}, with experimental evaluation in \S\ref{sec:experiments}.
The formal definition of the problem is given in \S\ref{sec:statement}
and the proof of the existence of a solution in \S\ref{sec:analysis}.
\S\ref{sec:serial} restates the deal.II serial algorithm in terms of
this proof.  The parallel part of this
algorithm is, in fact, an adaption of the well-known \emph{Union-Find}
algorithm, so in \S\ref{sec:union-find} we place this algorithm in the context
of that existing work.  We conclude the paper in \S\ref{sec:conclusion}.

Facet orientation has been addressed in different ways by various finite
element packages. Rognes et al.\@ \cite{Rognes2012} give a more detailed
discussion of the features which make consistent orientations useful for
the efficient assembly of finite elements. They also provide an algorithm to
find consistent facet orientations for simplicial meshes, which is used in
FEniCS \cite{Hoffman2006} and in Firedrake \cite{Rathgeber2015}.  Other
finite element packages often use different approaches.  libMesh
\cite{Kirk2006} uses global vertex numbers to determine the direction of
edges for \Hcurl{} conforming elements, however neither \Hdiv{} conforming,
nor higher-order elements are supported.  FreeFem++ \cite{Hecht2012} does
not support quadrilaterals.  Nektar++ \cite{Cantwell2015} stores and uses
explicit per-cell edge orientations, and also supports mixed cell meshes
(e.g.\@ mixed triangles and quadrilaterals).  These orientations are
employed to match degrees of freedom on facets for higher-order elements,
however no sign flipping is necessary since \Hdiv{} and \Hcurl{} conforming
elements are not supported.  Raviart-Thomas elements in DUNE
\cite{Bastian2006} store explicit signs ($-1$ or $+1$) for each facet, which
are referenced from manually written formul\ae.  deal.II \cite{Bangerth2007}
employs the serial algorithm we describe in \S\ref{sec:serial}, as a
post-processing step after reading the mesh.

\section{Problem Statement}
\label{sec:statement}

This section aims to formally define the problem that we later
analyse and provide algorithmic solutions to.
Throughout this paper, we say that a mesh has \emph{consistent facet
  orientations}, if:
\begin{enumerate}[(i)]
  \item Facet orientation is \emph{intrinsic} to the facet, i.e.\@ it
    does not depend on the cell from which the facet is accessed.
  \item There is a fixed reference cell such that for each cell in the mesh,
    there exists a mapping to the reference cell under which the orientation
    of each facet incident to that cell matches the orientation of the
    corresponding facet of the reference cell.
\end{enumerate}
The orientation of facets on the reference cell can be arbitrary for
the purpose of the above definition,
although this choice can limit the scope of meshes for which a
consistent facet orientation exists.

In this paper, we tackle the problem of finding \emph{consistent
  edge orientations} for quadrilateral meshes on 2-dimensional
orientable manifolds.

\section{Proof of the Existence of a Solution}
\label{sec:analysis}

We prove that \emph{consistent edge orientations} exist for any
quadrilateral mesh on an \emph{orientable} 2-dimensional manifold.
Later we build on the analysis below to construct algorithms that
find consistent orientations.

As mentioned earlier, the choice of edge orientations on the
reference quadrilateral can affect the set of meshes for which
consistent orientations exist.
Up to  symmetry, there are four different edge orientations of a
quadrilateral. These are shown in Figure~\ref{fig:four_quads}. The
symmetries arise since
we can arbitrarily choose the first vertex and the direction of
traversal when defining a mapping to the reference cell.
For each case, one can find a mesh on a 2-dimensional manifold which
does not have consistent edge orientations. \cite{Bangerth} includes a
counter-example for (a). A structured grid with odd number of cells in
both direction, folded into a torus, is a counter-example for (b) and
(d). A structured grid bent into a Möbius strip is a counter-example
for (c). However, when we restrict our attention to meshes on
\emph{orientable} manifolds (that is, the vast majority of domains actually
employed for finite element problems), then only (c) remains without a
counter-example. We here prove existence for this case:

\begin{figure}
  \centering

  \subcaptionbox{}
    {\includegraphics{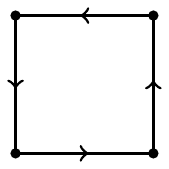}}
  \quad
  \subcaptionbox{}
    {\includegraphics{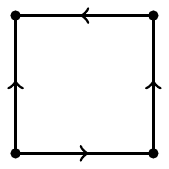}}
  \quad
  \subcaptionbox{\label{fig:quad3}}
    {\includegraphics{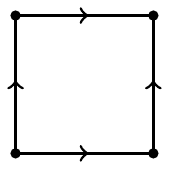}}
  \quad
  \subcaptionbox{}
    {\includegraphics{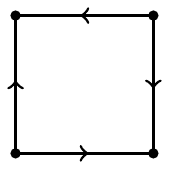}}

  \caption{Possible different quadrilateral edge orientations, after
    considering symmetries.}
  \label{fig:four_quads}
\end{figure}

\begin{theorem}
Consistent edge orientations exist for any quadrilateral mesh on an
orientable 2-dimensional manifold, given the edge orientations on the
reference quadrilateral as in Figure~\ref{fig:quad3}.
\end{theorem}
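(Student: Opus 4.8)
The plan is to reduce the theorem to a purely local matching condition together with a global consistency argument that uses orientability. First I would record the elementary fact that, up to the eight symmetries generated by the choice of first vertex and direction of traversal, a quadrilateral with directed edges is isomorphic to the reference cell of Figure~\ref{fig:quad3} if and only if its two pairs of opposite edges are each \emph{aligned}: writing a cell as a $4$-cycle $v_0 v_1 v_2 v_3$, the pair $\{v_0v_1, v_2v_3\}$ is aligned when the two edges point the same way under the correspondence $v_1 \leftrightarrow v_2$, $v_0 \leftrightarrow v_3$ induced by the connecting edges $v_1v_2$ and $v_3v_0$, and similarly for $\{v_1v_2, v_3v_0\}$. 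Checking this against the four orbits of Figure~\ref{fig:four_quads} is routine (compute sources and sinks): ``both pairs aligned'' characterises exactly the orbit of Figure~\ref{fig:quad3}, while one pair, or neither, aligned gives the remaining three cases. Hence it suffices to assign one direction to each edge of the mesh — this already gives property (i) — so that in every cell both pairs of opposite edges are aligned; the maps required for property (ii) are then the symmetries carrying each cell onto the reference cell, which exist precisely because each cell now matches case (c).

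Next I would phrase ``both pairs aligned everywhere'' as a relational Union--Find problem on the edges. Build an auxiliary graph $H$ whose vertices are the edges of the mesh and which, for every cell $c$ and each of its two pairs $\{e,e'\}$ of opposite edges, contains an arc between $e$ and $e'$ labelled by the alignment relation of that pair in $c$ (once an arbitrary reference direction is fixed on every edge, this label is a $\mathbb{Z}/2$ datum). A valid edge orientation is exactly an assignment of directions compatible with every arc of $H$. Since the mesh is a manifold, each mesh edge lies in one or two cells, so every vertex of $H$ has degree one or two; thus $H$ is a disjoint union of paths and cycles. On a path nothing can go wrong, so the only possible obstruction is a cycle of $H$ around which the composed alignment relation is nontrivial (``frustrated''), and a compatible orientation exists on a component if and only if it contains no frustrated cycle.

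The heart of the proof is to show that orientability forbids frustrated cycles. A cycle $e_0 - e_1 - \dots - e_{n-1} - e_0$ of $H$ comes with cells $c_0, \dots, c_{n-1}$ (indices mod $n$) such that $e_i, e_{i+1}$ are opposite in $c_i$ and $c_{i-1}, c_i$ meet along $e_i$; the union $c_0 \cup \dots \cup c_{n-1}$ is a \emph{closed strip} of quadrilaterals, each glued to the next along the edge opposite to the previous gluing. Abstractly such a strip is an annulus or a M\"obius band, and the composed alignment relation around the cycle is the identity in the first case and the flip in the second: in an annulus each ``side'' edge of each $c_i$ lies on one of the two boundary circles, and the correspondence across a cell preserves this labelling, whereas a M\"obius band interchanges the two sides after one loop. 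Now, since the ambient surface is orientable its $2$-cells carry a coherent orientation, which restricts to a coherent orientation of the strip; a M\"obius band of quadrilaterals admits no coherent orientation, so the strip must be an annulus and the cycle is not frustrated. (This is exactly why the M\"obius-strip grid is the counterexample to case (c) on a non-orientable surface.) With no frustrated cycle, choosing a direction freely for one edge of each component of $H$ and propagating the alignment relations — the relational Union--Find sweep of \S\ref{sec:union-find} — produces the required consistent orientation, completing the proof.

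I expect the main obstacle to be making the closed-strip step fully rigorous. The strip need not embed in the surface (cells may recur, and the short cycles $n = 1, 2$ degenerate), so the claims that a coherent orientation of the ambient $2$-cells pulls back to a coherent orientation of the strip, and that an orientable such strip is an annulus, should be formulated at the level of the abstract CW structure rather than as statements about a subsurface; likewise the identification of ``alignment holonomy equals the identity'' with ``two-sided'' should be set up carefully once and for all. Everything else is bookkeeping.
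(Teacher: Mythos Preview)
Your proposal is correct and follows essentially the same argument as the paper: both reduce to the observation that opposite edges in a cell determine each other's orientation, partition the edges into ribbons (your components of $H$), identify the only obstruction as a cycle whose propagated orientation conflicts with itself, and recognise such a cycle as a M\"obius strip ruled out by orientability. You are more careful than the paper about the local characterisation of orbit~(c) and about the closed-strip step (the paper simply asserts that the cells of a conflicting simple cycle ``form a M\"obius strip''), and you anticipate the Union--Find connection that the paper develops separately in \S\ref{sec:union-find}.
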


\begin{proof}
  Notice that on the reference quadrilateral, opposite edges always have the
  same orientation. When constructing a global orientation of the cells in a
  mesh, this implies that fixing the orientation of one edge
  \emph{determines} the orientation of the opposite edge. Conversely, the
  orientation of an edge imposes no restriction on the orientation of the
  two adjacent edges. Since each interior edge of the mesh is adjacent to
  two cells, fixing the orientation of one edge actually fixes the
  orientation of every edge reachable from the first edge by an unbroken
  sequence of opposite cell edges. This defines a relation ``the orientation
  of edge $a$ determines the orientation of edge $b$''.  We will refer to
  this as the \emph{orientation determination relation}. It is easy to see
  that this relation is reflexive, symmetric and transitive, and is
  therefore an equivalence relation which defines equivalence classes on
  the set of edges. The form of each equivalence class is a path or ribbon
  (see Figure~\ref{fig:ribbons})
  through the mesh connecting successive opposite faces. Within each class,
  the orientation of any edge determines the orientation of all other
  edges. Conversely, since the orientation of an edge only determines the
  orientation of the opposite edge and not that of the adjacent edges,
  separate classes can be oriented independently.

\begin{figure}
  \centering
  \includegraphics[width=0.7\textwidth]{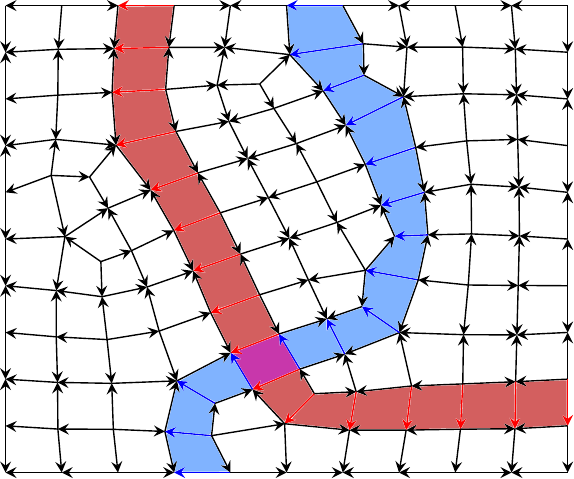}

  \caption{Unstructured quadrilateral mesh with consistent edge
    orientations. Two ``ribbons'' (sets of edges where the orientation
    of any edge determines the orientation of all edges) are
    highlighted in dark red and light blue.}
  \label{fig:ribbons}
\end{figure}

Note that if there exists a consistent edge orientation for a quadrilateral
mesh, one can invert the orientation of every edge in any equivalence class,
and still have a consistent edge orientation. Thus when assigning
orientations to a set of dependent edges, we can arbitrarily choose the
orientation of one edge, and that orientation propagates to all dependent
edges. Either both initial choices are fine, or a consistent orientation
does not exist.

This implies that a consistent orientation will always exist for a mesh,
unless there is a set of dependent edges for which the orientations necessarily
conflict. That is, there is an edge $u$, whose orientation forces a
different orientation on an edge $v$ through path $ p_1 $ than through path
$ p_2 $. From a slightly different point of view, there is a cycle $ p = u
\overset{p_1}{\leadsto} v \overset{p'_2}{\leadsto} u $ around $u$, where $
p'_2 $ is the reverse of $ p_2 $, such that the orientation of $u$ forces a
conflicting orientation on itself through path $p$. Without loss of
generality we can assume that $p$ is a simple cycle, i.e.\@ $p$ contains
each of its edges only once (except for $u$, which appears as starting and
ending edge in $p$). In this case the quadrilateral cells in $p$ form a
Möbius strip (see Figure~\ref{fig:moebius_strip}), which implies that the
mesh is not on an orientable manifold.

\begin{figure}
  \centering
  \includegraphics{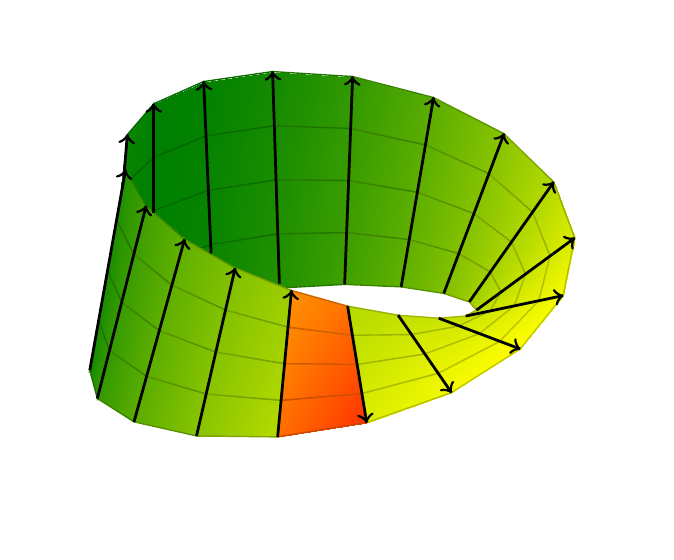}

  \caption{Edge orientations propagate in a conflicting manner on
    a Möbius strip, and \textit{vice versa}, if a cycle propagates
    orientations in a conflicting way, then connecting the cells will
    form a Möbius strip.}
  \label{fig:moebius_strip}
\end{figure}

We have shown
that if the mesh does not admit consistent edge orientations, then its
manifold is non-orientable. It follows immediately that consistent orientations
exist for any quadrilateral mesh on an orientable manifold.
\end{proof}

\section{Serial Algorithm}
\label{sec:serial}

Based on the above discussion, Alg.~\ref{alg:serial} describes
the steps of consistently orienting a quadrilateral mesh. We assume
that initially $ \mathtt{visited}[e] = \mathtt{False} $ and
$ \mathtt{orientations}[e] $
is undefined for all $ e \in E $, where $ E $ denotes the set
of edges in the mesh. For generality, lines 13 and 14 lack the details
of mesh representation and orientation representation.

\begin{algorithm}
  \caption{Serial algorithm}
  \label{alg:serial}
  \begin{algorithmic}[1]
    \ForAll {$ e \in E $}
      \State \Call{Orient}{$ e $, $ \uparrow $}
    \EndFor
    \Statex
    \Procedure {Orient}{$ e $, $ o $}
    \If {$ \mathtt{visited}[e] $}
      \If {$ \mathtt{orientations}[e] \neq o $}
      \Comment{Möbius strip found}
      \State \textbf{abort}
      \EndIf
    \Else
      \State $ \mathtt{visited}[e] \gets \mathtt{True} $
      \State $ \mathtt{orientations}[e] \gets o $
      \ForAll {$ c \in \text{cells incident to } e $}
        \State $ e' \gets \text{edge opposite to } e \text{ in } c $
        \State $ o' \gets \text{orientation for } e' \text{ consistent with } e \text{ having orientation } o $
        \State \Call {Orient}{$ e' $, $ o' $}
      \EndFor
    \EndIf
    \EndProcedure
  \end{algorithmic}
\end{algorithm}

Lines 10--16 run only $ |E| $ times, since each time they flip
a \texttt{False} to \texttt{True} in $ \mathtt{visited} $. Since the
quadrilateral mesh is on a 2-dimensional manifold, only 1 or 2 cells
can be incident to an edge (line 12), therefore lines 10--16 take
$ O(|E|) $ total runtime, not considering the time spent in recursive
\textsc{Orient} calls outside these lines. It also follows that
\textsc{Orient} is called $ O(|E|) $ times: $ |E| $ times in line 2,
and between $ |E| $ and $ 2 |E| $ times in line 15. Hence the serial
algorithm takes $ O(|E|) $ time.

The serial algorithm was first implemented in deal.II
\cite{Bangerth2007}, and briefly described in its
documentation \cite{Bangerth}.

\section{Parallel Algorithm}
\label{sec:parallel}

This section describes the \emph{parallel} extension of the above
algorithm for \emph{distributed memory} systems, as it is
implemented in Firedrake.

The basic idea of Firedrake's parallel implementation is to first
solve edge orientation locally, then flip some local ribbon segments
until all local domains agree on the orientation of all shared edges
(see Figure~\ref{fig:ribbon_segments}).
To speed up flipping local ribbon segments, each parallel process
records connections between its shared edges. Interior edges are
re-aligned only when agreement on the orientations of all shared
edges is reached.

\begin{figure}
  \centering
  \includegraphics[width=0.6\textwidth]{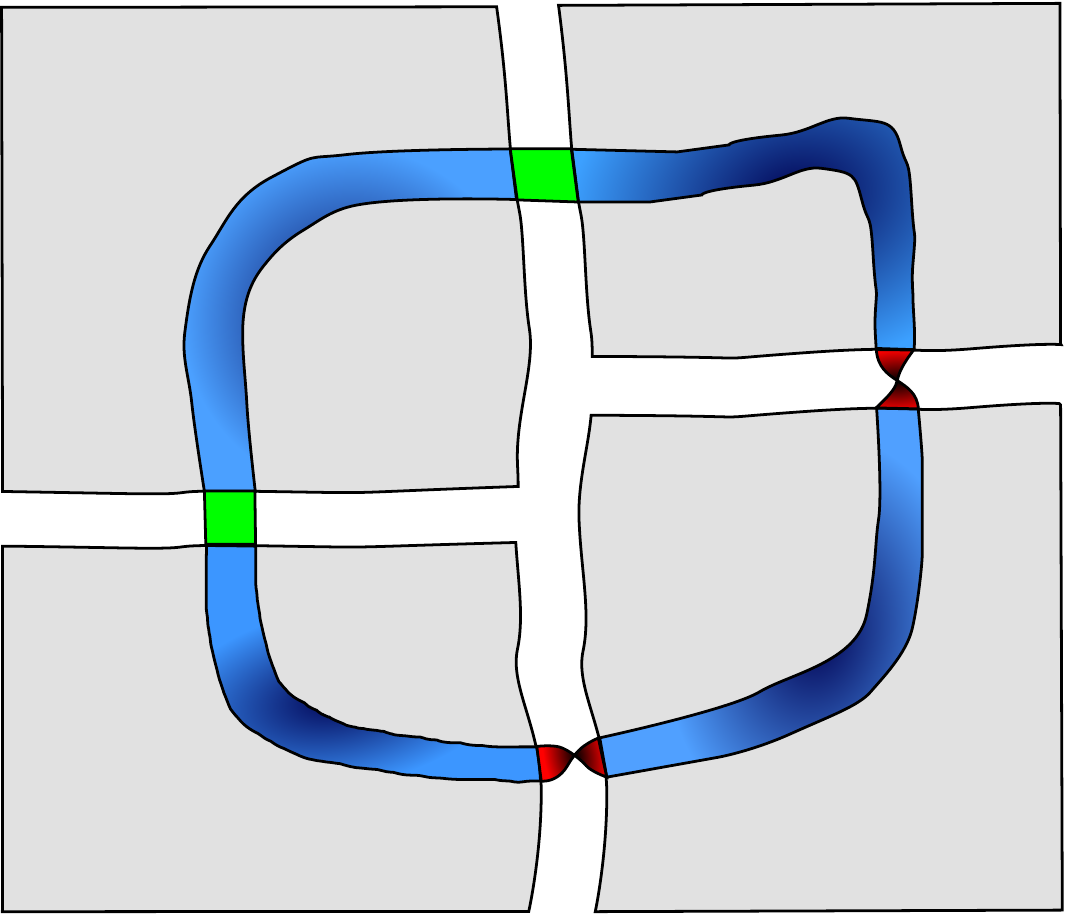}

  \caption{A ``ribbon'' in the global mesh, and its segments in the
    local domains. It is assumed that the orientation of edges is
    consistent inside these segments. However, the local domains
    \emph{may} (green, straight connection) or \emph{may not}
    (red, twisted connection) agree on the
    orientation of shared edges.}
  \label{fig:ribbon_segments}
\end{figure}

Alg.~\ref{alg:parallel} describes how parallel processes negotiate the
orientation of shared edges. Before the negotiation starts, each
process aligns its edges locally, while populating the
\texttt{affects\_edge} and \texttt{affects\_orient} arrays.  If $ u $
and $ v $ are edges shared with other processes, and connected by a
local ribbon segment, then \texttt{affects\_edge} is populated with
entries $ \mathtt{affects\_edge}[u] = v $ and $
\mathtt{affects\_edge}[v] = u $, and \texttt{affects\_orient} is
populated with entries $ \mathtt{affects\_orient}[u] =
\mathtt{affects\_orient}[v] ={} \uparrow $ when $ u $ and $ v $ have
the same orientation, and with entries $ \mathtt{affects\_orient}[u] =
\mathtt{affects\_orient}[v] ={} \downarrow $ otherwise. If $ u $ is a
shared edge, but it does not connect to any other shared edge, these
arrays are not updated. If we encounter a closed loop while locally
orienting edges, there is no need to negotiate the orientation its
edges with other parallel processes.

Another prerequisite is to initialise the \texttt{our\_weight} and
\texttt{our\_orient} arrays with the weights and proposed orientations
of the shared edges of a parallel process. The weights shall be
globally unique for all local ribbon segments. We use $
\mathtt{our\_weight}[e] = \mathit{size} \cdot l(e) + \mathit{rank} $,
where $ l(e) $ is the local index of the ribbon segment which $ e $
belongs to, \textit{size} is the number of parallel process, and
\textit{rank} is the index of the current process. We need to ensure
that locally connected edges have the same weight and consistent
orientation initially. This invariant is maintained by
Alg.~\ref{alg:parallel}.

\begin{algorithm}
  \caption{Parallel algorithm}
  \label{alg:parallel}
  \begin{algorithmic}[1]
    \State $ \mathtt{conflict} \gets \mathtt{True} $
    \While {\texttt{conflict}}
      \State \Call {ExchangeEdgeData}{\texttt{our\_orient}, \texttt{their\_orient}}
      \State \Call {ExchangeEdgeData}{\texttt{our\_weight}, \texttt{their\_weight}}
      \State $ \mathtt{conflict} \gets \mathtt{False} $
      \ForAll {$ e \in \text{shared edges} $}
        \If {$ \mathtt{our\_weight}[e] = \mathtt{their\_weight}[e]$}
          \If {$ \mathtt{our\_orient}[e] \neq \mathtt{their\_orient}[e] $}
            \Comment{Möbius strip found}
            \State \textbf{abort}
          \EndIf
        \ElsIf {$ \mathtt{our\_weight}[e] < \mathtt{their\_weight}[e]$}
          \State $ \mathtt{our\_weight}[e] = \mathtt{their\_weight}[e] $
          \State $ \mathtt{our\_orient}[e] = \mathtt{their\_orient}[e] $
          \If {$ e \in \mathtt{affects\_edge} $}
            \State $ e' \gets \mathtt{affects\_edge}[e] $
            \State $ o' \gets \mathtt{affects\_orient}[e] \text{ XOR } \mathtt{our\_orient}[e] $
            \If {$ \mathtt{our\_orient}[e'] \neq o' $}
              \State $ \mathtt{conflict} \gets \mathtt{True} $
            \EndIf
            \State $ \mathtt{our\_weight}[e'] = \mathtt{our\_weight}[e] $
            \State $ \mathtt{our\_orient}[e'] = o' $
          \EndIf
        \EndIf
      \EndFor
      \State \texttt{conflict} $ \gets $ \Call{AllReduce}{\texttt{conflict}, \textsc{Or}}
    \EndWhile
  \end{algorithmic}
\end{algorithm}

Alg.~\ref{alg:parallel} consists of a main loop which terminates when there
are no more conflicts in the proposed orientations for shared edges. In
line~3--4 each parallel process exchanges its proposed orientations
and weights for its shared edges with its
neighbours. \texttt{their\_weight} and \texttt{their\_orient} have
type and size identical to \texttt{our\_weight} and
\texttt{our\_orient} respectively, but they contain the values
proposed by the
neighbouring processes. The general rule is to adopt the orientation with
the larger weight.  Lines~11--23 handle the adoption of the remote weight and
orientation. Line~14 checks if such an update needs propagation via the
local ribbon segment. Lines~15--21 propagate the new information to the
other connected shared edge, and check if it causes that edge to flip.
If yes, then
another round of exchange is necessary, since the orientation for $ e' $
that this process forwarded in line~3, has now become outdated. When the
local weight is larger than the remote weight, there is nothing to do: the
other parallel process will adopt our orientation.  Since all weights are
initially unique, and then they are copied between neighbours, the local
weight being equal to the remote weight means that both parallel processes
have their local ribbon segment aligned to the same ``source''. Therefore a
mismatch in the orientation is only possible if the global mesh contains a
Möbius strip, which is checked in lines~8--10.


This algorithm terminates in at most $ k $ communication rounds, where
$ k $ is the maximum number of local segments in any ribbon of the
global mesh. Since each ribbon is ``untwisted'' independently, further
discussion focuses on one arbitrary ribbon in the global mesh.
Any time there is a conflict between the orientation of two local
segments, the one with the higher weight ``wins'', thus we can
conclude that the orientation of a ribbon in the global mesh is
ultimately determined by the local segment which has the highest
weight. Initially only one segment has this highest weight. In each
communication round one or two additional segments adopt this weight
and align their orientations, until the information has propagated
to all segments. At that point, there are either no conflicts along
the ribbon, or the algorithm will abort in the next communication
round due to finding a Möbius strip. Therefore the parallel
algorithm terminates in at most $ k $ communication rounds.

Theoretically, $ k $ can be large with respect to other parameters of the
mesh. For example, one can mesh an annulus domain with a very long,
spiraling ribbon. However, a far more typical scenario is for each ribbon to
either enter one side of the domain and exit the other, or to form a
loop. In either of these cases, assuming the domain has bounded aspect ratio
and that the parallel decomposition approximately minimises the surface area
of partitions, $ k $ will be $ O(\sqrt{P}) $, where $ P $ is the number of
parallel processes.

\section{Experiments}
\label{sec:experiments}

We have run several scaling experiments on the ARCHER UK National
Supercomputing Service. Our experimentation framework is available as
\cite{miklos_homolya_2015_31190}. We used 6 different meshes to
evaluate the performance of the parallel algorithm described in
\S\ref{sec:parallel}:
\begin{description}
  \item[\texttt{s\_square}:] structured grid on a square domain
  \item[\texttt{s\_sphere}:] cubed sphere mesh (a quadrilateral
    mesh of the surface of a sphere formed by deforming a cube to sphere shape
    and refining)
  \item[\texttt{u\_square}:] unstructured mesh on a square domain
  \item[\texttt{u\_sphere}:] unstructured mesh on the surface of a sphere
  \item[\texttt{t10}, \texttt{t11}:] unstructured meshes with
    non-uniform resolution (see Figure~\ref{fig:t10_t11})
\end{description}
The first two were generated using Firedrake utility functions, the other
four were generated with Gmsh \cite{Geuzaine2009}, version
2.8.3. \texttt{t10} and \texttt{t11} are examples from the Gmsh tutorial. To
give an impression of the level of irregularity of these meshes,
low-resolution versions are presented in Figure~\ref{fig:t10_t11}.
We used meshes with cell count in the order of millions, unfortunately mesh
generation becomes very difficult for meshes bigger than that.

\begin{figure}
  \centering
  \includegraphics[height=1.5in]{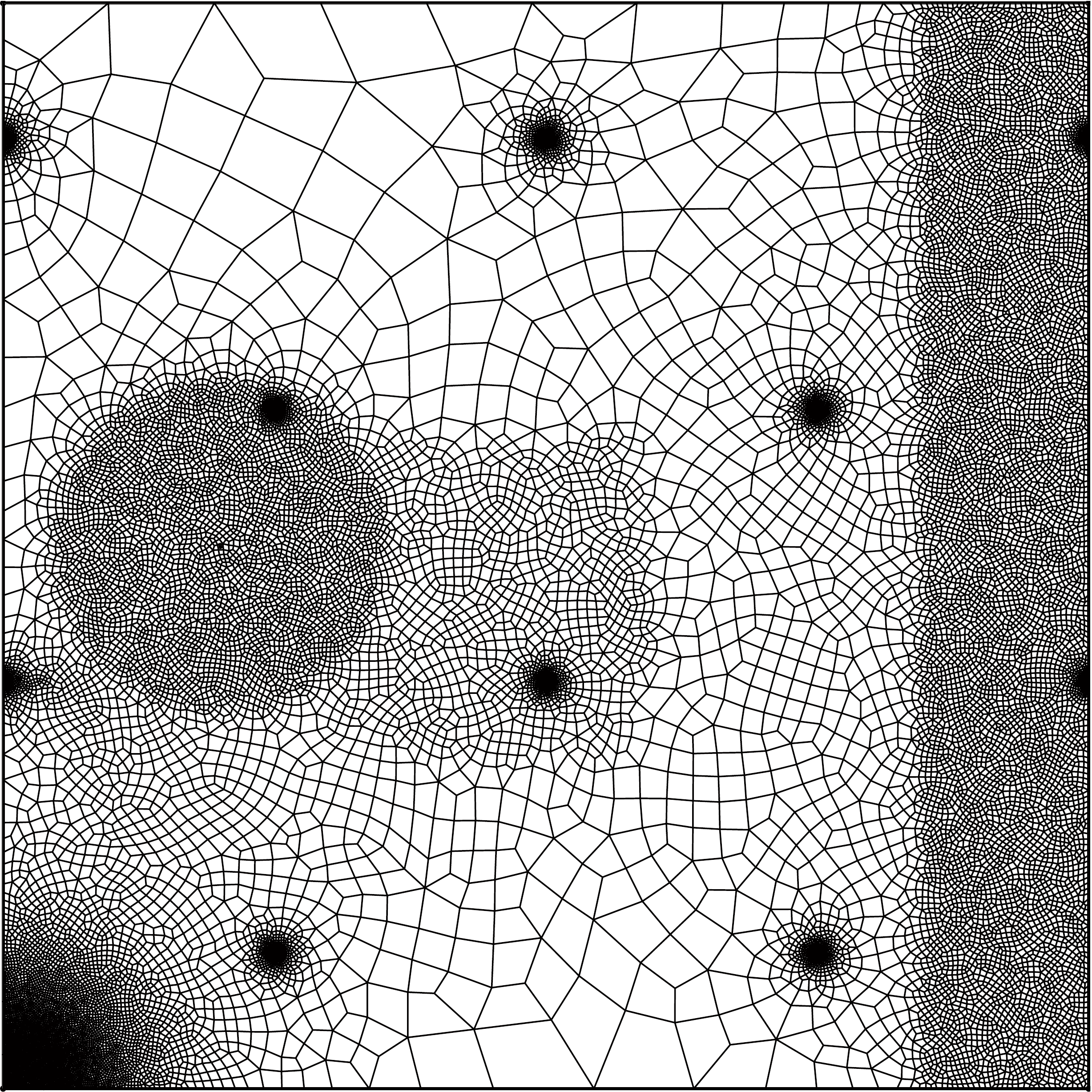}
  \quad
  \includegraphics[height=1.5in]{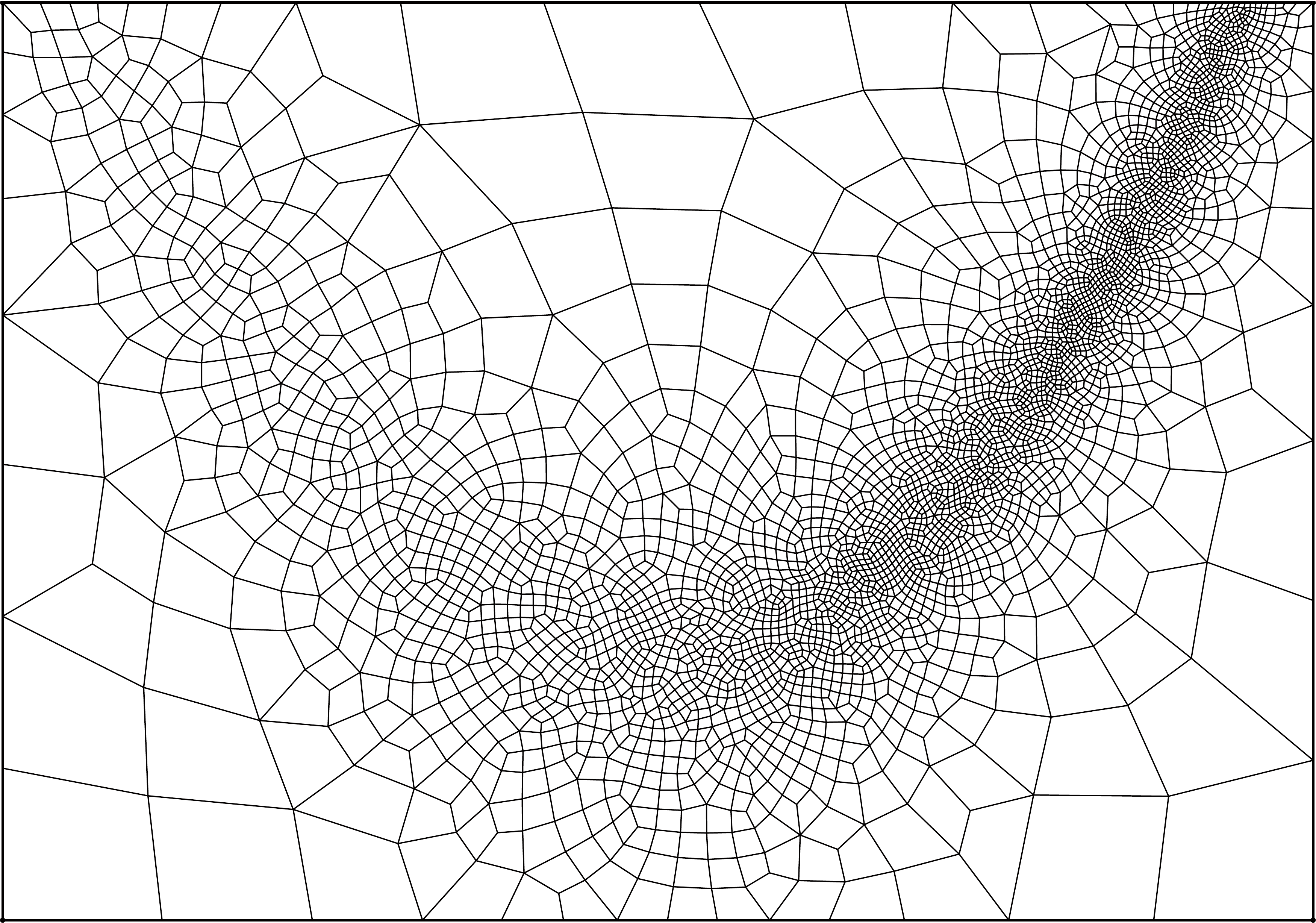}
  \caption{Low-resolution version of mesh samples \texttt{t10} (left) and \texttt{t11} (right).}
  \label{fig:t10_t11}
\end{figure}

We used Firedrake to construct consistent edge orientations for each
of these meshes on up to 24576 cores (1024 computing nodes, each with
24 CPU cores). We measured the number of required communication rounds
(Figure~\ref{fig:comm_rounds}) and execution time
(Figure~\ref{fig:exec_time}). Each experiment was repeated 4 times,
except for those involving the largest number of cores: for cost
efficiency reasons, they were carried out only once. The number of
communication rounds were the same during each instantiation of the
same experiment, however, the execution times varied greatly,
especially when they were under a second. Therefore reported execution
times should be taken with a grain of salt. Figure~\ref{fig:exec_time}
shows the average of 4 measurements for each data point.

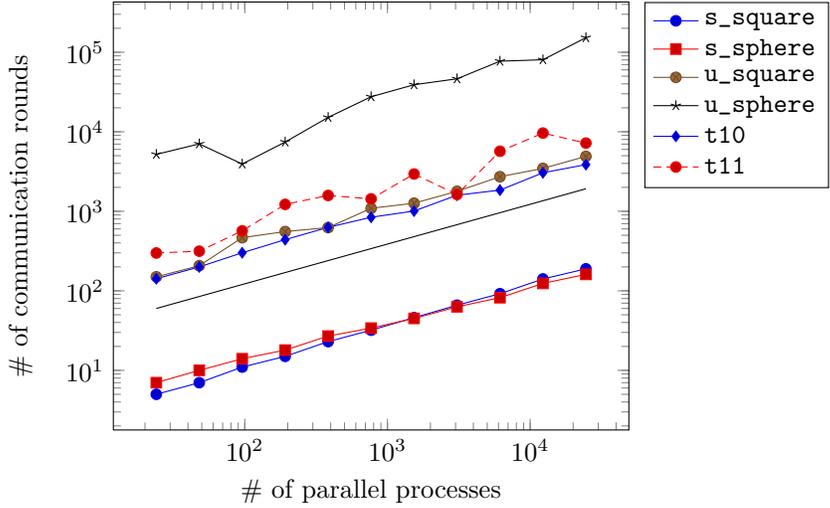
\begin{figure}
  \centering
  \begin{tikzpicture}
    \begin{loglogaxis}[
      legend cell align=left,
      legend pos=outer north east,
      xlabel={\# of parallel processes},
      ylabel={\# of communication rounds},
      ]
      \addplot table [x=P, y=rounds, col sep=comma] {rounds_s_square.csv};
      \addlegendentry{\texttt{s\char`_square}}
      \addplot table [x=P, y=rounds, col sep=comma] {rounds_s_sphere.csv};
      \addlegendentry{\texttt{s\char`_sphere}}
      \addplot table [x=P, y=rounds, col sep=comma] {rounds_u_square.csv};
      \addlegendentry{\texttt{u\char`_square}}
      \addplot table [x=P, y=rounds, col sep=comma] {rounds_u_sphere.csv};
      \addlegendentry{\texttt{u\char`_sphere}}
      \addplot table [x=P, y=rounds, col sep=comma] {rounds_t10.csv};
      \addlegendentry{\texttt{t10}}
      \addplot table [x=P, y=rounds, col sep=comma] {rounds_t11.csv};
      \addlegendentry{\texttt{t11}}

      \addplot [black] table [x=P, y=rounds] {
        P     rounds
        24    60
        24576 1920
      } node[pos=0.4,pin=355:{$\sqrt{P}$}] {};
    \end{loglogaxis}
  \end{tikzpicture}
  \caption{Number of required communication rounds for each mesh sample.}
  \label{fig:comm_rounds}
\end{figure}

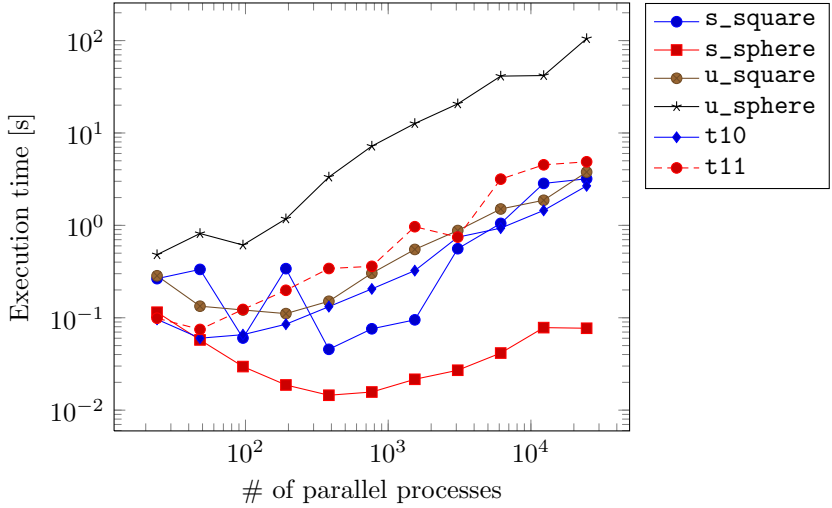
\begin{figure}
  \centering
  \begin{tikzpicture}
    \begin{loglogaxis}[
      legend cell align=left,
      legend pos=outer north east,
      xlabel={\# of parallel processes},
      ylabel={Execution time [s]},
      ]
      \addplot table [x=P, y=time, col sep=comma] {time_s_square.csv};
      \addlegendentry{\texttt{s\char`_square}}
      \addplot table [x=P, y=time, col sep=comma] {time_s_sphere.csv};
      \addlegendentry{\texttt{s\char`_sphere}}
      \addplot table [x=P, y=time, col sep=comma] {time_u_square.csv};
      \addlegendentry{\texttt{u\char`_square}}
      \addplot table [x=P, y=time, col sep=comma] {time_u_sphere.csv};
      \addlegendentry{\texttt{u\char`_sphere}}
      \addplot table [x=P, y=time, col sep=comma] {time_t10.csv};
      \addlegendentry{\texttt{t10}}
      \addplot table [x=P, y=time, col sep=comma] {time_t11.csv};
      \addlegendentry{\texttt{t11}}
    \end{loglogaxis}
  \end{tikzpicture}
  \caption{Execution time to construct consistent edge orientations in
    parallel for each mesh sample.}
  \label{fig:exec_time}
\end{figure}

Unsurprisingly, the construction edge orientations for structured
meshes takes relatively few communication rounds.
Figure~\ref{fig:comm_rounds} also shows that for a rectangular domain,
the number of required communication rounds stays the same order of
magnitude in case of non-uniform meshing as with uniform resolution.
However, the unstructured mesh on a spherical domain requires at least
one order of
magnitude more communication rounds. This is probably caused by the
fact that a sphere does not have a boundary, thus all ribbons must
form a closed loop, so traversing a ribbon must continue until the
starting edge is reached again. Regardless, our measurements confirm
our predicted $ O(\sqrt{P}) $ communication rounds for strong
scaling. Table~\ref{tab:slopes} collects the slopes of lines fitted to
the log-log data shown in Figure~\ref{fig:comm_rounds}, all values
being close to the expected $ 0.5 $.

\begin{table}
  \centering
  \csvreader[tabular=l | l, table head=\textbf{Test case} & \textbf{Slope} \\ \hline]
  {slopes.csv}{}{\texttt{\csvcoli} & \csvcolii}
  \caption{Slopes of lines fitted to the log-log data shown in Figure~\ref{fig:comm_rounds}.}
  \label{tab:slopes}
\end{table}

In most of the cases shown in Figure~\ref{fig:exec_time}, the execution time
initially decreases as the number of parallel processes increases. It is
likely that this is due to the parallelisation of the local work each
process has to do. Eventually the increasing number of communication
rounds becomes dominant, and the execution times grow with the number
parallel processes. However, even the unstructured sphere mesh with
almost one million cells, running on about a quarter of ARCHER, takes only
105 seconds. For users running supercomputer-scale simulations, the
few minutes spent on edge orientations are not likely to become a
bottleneck.

\section{Alternative Approaches to Parallelism}
\label{sec:union-find}

The edge orientation problem can be reduced to the well-known
\emph{Union-Find algorithm}, also known as the \emph{disjoint set} data
structure \cite{Cormen2001}.  We established in \S\ref{sec:analysis}
that the orientation determination relation is an equivalence relation
which defines equivalence classes (the ``ribbons'') on the set of edges.
To discover the ribbons, we can immediately use the Union-Find
algorithm. However, with an extension to it, we can also retrieve the
correct edge orientations.

First we briefly introduce the Union-Find algorithm. Let $ e_1,
e_2, \ldots, e_n $ be abstract elements, and $ S_1, S_2, \ldots, S_k $
disjoint non-empty sets over these elements, with $k \le n$. The sets are
initially singletons: $ S_i = \{ e_i \}, i = 1\ldots{}n $.
The \textsc{Union} operation destructively joins two sets.
The \textsc{Find} operation can be used to determine whether two elements,
$ e_i $ and $ e_j $, are in the same set.
Since the sets are disjoint, each element is in exactly one set.
Each set is identified by its \emph{representative element}, which can
be any of its elements.
The \textsc{Find} operation returns, for any element $ e_i $, the
representative element of the set to which $ e_i $ belongs.
Thus $ \textsc{Find}(e_i) = \textsc{Find}(e_j) $ if $ e_i $
and $ e_j $ are in the same set.

To solve the edge orientation problem, the edges of the mesh become the
abstract elements for the Union-Find algorithm, starting with a singleton
set for each edge. Then, for each cell, we join both pairs of opposite
edges, more precisely we join the sets to which those opposite edges
belong. When done, each set corresponds to a ``ribbon'', a set of edges
which determine each other's orientation.  However, we still need to
traverse the ribbon to determine the permitted relative orientations between
edges.  We will avoid this traversal by extending the Union-Find algorithm,
but let us first briefly present the usual representation of sets and the
algorithms for \textsc{Find} and \textsc{Union}.

Each element $ u $ links to its \emph{parent element} $ p(u) $.
For representative elements, $ p(u) = u $.
For all other elements, $ p(u) $ is another element in the same set,
and following the parent links must eventually lead to the
representative element of the set.
Alg.~\ref{alg:find} and Alg.~\ref{alg:union} show the implementation of the
\textsc{Find} and \textsc{Union} operations, respectively.

\begin{algorithm}
  \caption{\textsc{Find} operation for disjoint sets}
  \label{alg:find}
  \begin{algorithmic}[1]
    \Function {Find}{$ u $}
    \If {$ p(u) = u $}
      \State \Return $ u $
    \Else
      \State \Return \Call {Find}{$ p(u) $}
    \EndIf
    \EndFunction
  \end{algorithmic}
\end{algorithm}

\begin{algorithm}
  \caption{\textsc{Union} operation for disjoint sets}
  \label{alg:union}
  \begin{algorithmic}[1]
    \Procedure {Union}{$ u $, $ v $}
    \State $ r_u \gets $ \Call {Find}{$ u $}
    \State $ r_v \gets $ \Call {Find}{$ v $}
    \If {$ r_u \neq r_v $}
      \Comment{$ r_v $ becomes the representative element of the united set}
      \State $ p(r_u) \gets r_v $
    \EndIf
    \EndProcedure
  \end{algorithmic}
\end{algorithm}

\textsc{Find} simply traverses the parent links until it reaches the
representative element.  A variation of \textsc{Find} also reassigns the parent
links of each element on this path directly to the representative element --
this technique is called \emph{path compression}.
\textsc{Union} first calls \textsc{Find} to look up the representative elements
of both sets. If the two sets are indeed different, then it assigns the
parent link of one of the representative elements to point to the other
representative element, thus effectively merging the sets.
Depending on the exact variant of the Union-Find algorithm \cite{Patwary2010},
there is generally a rule to determine which one of $ r_u $ and $ r_v $ remains
a representative element, but we ignore that detail for this discussion.
The textbook version of the Union-Find algorithm \cite{tarjan1975efficiency}
uses \emph{rank based} merging in \textsc{Union} and path compression in
\textsc{Find}, and it is proven that $ m $ \textsc{Find} and $ n $
\textsc{Union} operations execute in $ O( (m + n) \alpha(m, n) ) $ time, where
$ \alpha(m, n) $, the functional inverse of Ackermann's function, is so
slow-growing that for all practical purposes it can be considered constant.

To address the edge orientation problem, we attach a binary value
$ \omega(u) $ to each parent link, which describes the
permitted relative orientation between the edge $ u $ and its parent
edge $ p(u) $.  Alg.~\ref{alg:find_orient} and
Alg.~\ref{alg:union_orient} describe \textsc{Find} and \textsc{Union}
with orientations, respectively.
The changes include the maintenance of orientation, and Möbius strip
detection in \textsc{Union} when one tries to connect a ribbon to
itself with the inconsistent orientation.

\begin{algorithm}
  \caption{\textsc{Find} operation for disjoint sets \emph{with orientation}}
  \label{alg:find_orient}
  \begin{algorithmic}[1]
    \Function {Find}{$ u $}
    \If {$ p(u) = u $}
      \State \Return $ u $, $ \uparrow $
    \Else
      \State $ r_u, o_p \gets $ \Call {Find}{$ p(u) $}
      \State \Return $ r_u, \, \omega(u) \text{ XOR } o_p $
    \EndIf
    \EndFunction
  \end{algorithmic}
\end{algorithm}

\begin{algorithm}
  \caption{\textsc{Union} operation for disjoint sets \emph{with orientation}}
  \label{alg:union_orient}
  \begin{algorithmic}[1]
    \Procedure {Union}{$ u $, $ v $, $ o $}
    \State $ r_u, o_u \gets $ \Call {Find}{$ u $}
    \State $ r_v, o_v \gets $ \Call {Find}{$ v $}

    \If {$ r_u = r_v $}
      \If {$ o_u \text{ XOR } o \text{ XOR } o_v ={} \downarrow $}
        \Comment{Möbius strip found}
        \State \textbf{abort}
      \EndIf
    \Else
      \Comment{$ r_v $ becomes the representative element of the united set}
      \State $ p(r_u) \gets (r_v, \, o_u \text{ XOR } o \text{ XOR } o_v ) $
    \EndIf
    \EndProcedure
  \end{algorithmic}
\end{algorithm}

This extension can be easily applied to most of the relevant variants
of the Union-Find algorithm, including parallel ones.
Alg.~\ref{alg:general} describes the algorithm to solve the edge
orientation problem using the Union-Find algorithm extended with
orientations. While Alg.~\ref{alg:general} itself does not feature any
parallel primitives, it will solve the edge orientation problem in
parallel if a parallel variant of the Union-Find algorithm
is employed. In that case each parallel process only iterates over its
\emph{local} cells and edges.

\begin{algorithm}
  \caption{Edge orientations using the extended Union-Find algorithm}
  \label{alg:general}
  \begin{algorithmic}[1]
    \ForAll {$ c \in \text{\(local\) cells} $}
      \State $ e_1, e_2, e_1', e_2' \gets \text{edges of } c \text{, each consecutive pair sharing a vertex} $
      \State $ o_1 \gets \text{permitted relative orientation between } e_1 \text{ and } e_1' $
      \State $ o_2 \gets \text{permitted relative orientation between } e_2 \text{ and } e_2' $
      \State \Call {Union}{$ e_1 $, $ e_1' $, $ o_1 $}
      \State \Call {Union}{$ e_2 $, $ e_2' $, $ o_2 $}
    \EndFor
    \ForAll {$ e \in \text{\(local\) edges} $}
      \State $ r_e, o_e \gets $ \Call {Find}{$ e $}
      \State $ \mathtt{orientations}[e] \gets o_e $
    \EndFor
  \end{algorithmic}
\end{algorithm}

The advantage of this approach is that we can easily re-use existing
work on parallelising the Union-Find algorithm to solve the edge
orientation problem.
Several attempts have been made to parallelise that algorithm both
on shared memory computers \cite{Cybenko1988, Anderson1991, Bader2005,
  Patwary2012}, and on distributed memory systems \cite{Cybenko1988,
  Manne2010, Harrison2011, Iverson2015}.
Since Firedrake is meant to scale on today's supercomputers, we are
only interested in distributed memory parallelism.
Cybenko et al.\@ \cite{Cybenko1988} attempt both shared memory and
distributed memory parallelisation, but their distributed memory
algorithm suffered from slowdown with increasing number of processors
and constant problem size.
Manne and Patwary \cite{Manne2010} propose a distributed memory
parallel algorithm, which they demonstrate to scale up to 40
processors for certain graphs. This is, however, far from the scale
of supercomputers on which Firedrake is supposed to run.
Harrison et al.\@ \cite{Harrison2011} achieve good scaling for their
purposes, which is connected component labeling for distributed
memory visualisation tools. However, the final step of their algorithm
uses all-to-all communication to merge the local components, which we
expect to become a serious bottleneck for our purposes, since
distributed quadrilateral meshes produce numerous local ribbon
segments and connections between them.
Iverson et al.\@ \cite{Iverson2015} compare five parallel algorithms
for connected component labeling on distributed memory systems, both
theoretically and experimentally. The approach they call \emph{label
  propagation} is essentially equivalent to the parallelisation of
the edge orientation algorithm implemented in Firedrake.
Although it was not always the fastest in their evaluations,
it did not suffer from the explosive use of memory which affected some
other algorithms.

\section{Conclusion \& Outlook}
\label{sec:conclusion}

We have proposed a distributed mesh parallel algorithm, extending the
serial quadrilateral edge orientation algorithm in deal.II.
We have proven that there are consistent edge orientations for any
quadrilateral mesh of an orientable 2-dimensional manifold.
Our novel analysis establishes a link between the well-known
\emph{Union-Find} algorithm and assigning orientations to the edges of
a quadrilateral mesh, thus existing parallelisation approaches for
the Union-Find algorithm can be easily adapted to construct further
parallel algorithms for mesh orientations.

\bibliography{references}

\end{document}